\newcommand{\R}{\mathbb{R}}
\newcommand{\E}{\mathbb{E}}
\newcommand{\I}{\mathbb{I}}
\newcommand{\CR}{\mathcal{R}}
\newcommand{\diag}{\operatorname{diag}}
\begin{document}
\allowdisplaybreaks

\newcommand{\arXivNumber}{2006.08301}

\renewcommand{\PaperNumber}{083}

\FirstPageHeading

\ShortArticleName{On Products of Delta Distributions and Resultants}

\ArticleName{On Products of Delta Distributions and Resultants}

\Author{Michel BAUER~$^{\dag^1\dag^2\dag^3\dag^4\dag^5}$ and Jean-Bernard ZUBER~$^{\dag^6\dag^7}$}

\AuthorNameForHeading{M.~Bauer and J.-B.~Zuber}

\Address{$^{\dag^1}$~Institut de Physique Th\'eorique de Saclay, CEA-Saclay, F-91191 Gif-sur-Yvette, France}
\EmailDD{\href{mailto:michel.bauer@ipht.fr}{michel.bauer@ipht.fr}}
\Address{$^{\dag^2}$~CNRS, UMR 3681, IPhT, F-91191 Gif-sur-Yvette, France}
\Address{$^{\dag^3}$~D\'epartement de math\'ematiques et applications, \'Ecole normale sup\'erieure,\\
\hphantom{$^{\dag^3}$}~F-75005 Paris, France}
\Address{$^{\dag^4}$~CNRS, UMR 8553, DMA, ENS, F-75005 Paris, France}
\Address{$^{\dag^5}$~PSL Research University, F-75005 Paris, France}

\Address{$^{\dag^6}$~Sorbonne Universit\'e, UMR 7589, LPTHE, F-75005, Paris, France}
\EmailDD{\href{mailto:jean-bernard.zuber@upmc.fr}{jean-bernard.zuber@upmc.fr}}
\Address{$^{\dag^7}$~CNRS, UMR 7589, LPTHE, F-75005, Paris, France}

\ArticleDates{Received June 16, 2020, in final form August 20, 2020; Published online August 25, 2020}

\Abstract{We prove an identity in integral geometry, showing that if $P_x$ and $Q_x$ are two polynomials, $\int {\rm d}x\, \delta(P_x) \otimes \delta(Q_x)$ is proportional to~$\delta(R)$ where $R$ is the resultant of~$P_x$ and~$Q_x$.}

\Keywords{measures and distributions; integral geometry}

\Classification{46F10; 49Q15}

\section{Introduction}

In their classical text \cite[see in particular Chapter~3]{GS}, Gelfand and Shilov introduce generalized functions localized on smooth submanifolds of $\mathbb{R}^n$. A most important example is $\delta(f)$ with $f\colon \mathbb{R}^n\to \mathbb{R}$ a $C^1$ function whose gradient $\operatorname{Grad} f$ vanishes nowhere on $f=0$.

In this note, we introduce a variant of the original definition of $\delta(f)$ and extend it to deal with certain cases when $f$ and $\operatorname{Grad} f$ have common zeros. A salient feature of this extension is that $\delta(f)$, when defined, is no longer always a distribution, but a positive $\sigma$-finite measure.

Our aim is to prove the following combinatorial identity:
\begin{proposition} \label{prop:main}
 Let $A$, $B$ be disjoint finite sets. For $x\in \mathbb{R}$, $a,b\in \mathbb{R}^*$, $u:=(u_\alpha)_{\alpha \in A} \in \mathbb{R}^A$ and $v:=(v_\beta)_{\beta \in B} \in \mathbb{R}^B$, set $P_x(u):=a\prod_{\alpha \in A}(x-u_\alpha)$ and $Q_x(v):=b\prod_{\alpha \in B}(x-v_\beta)$. Let $R(u,v):=a^{|B|} b^{|A|} \prod_{\alpha \in A \atop \beta\in B} (u_\alpha-v_\beta)$ be the resultant of $P_x$ and $Q_x$ seen as polynomials in the variable $x$. Set
\[ J(u,v):= b^{|A|-1} a^{|B|-1} \sum_{\beta' \in B} \prod_{\beta'' \neq \beta'}
 \frac{\prod\limits_{\alpha''\in A}(v_{\beta''}-u_{\alpha''})}{(v_{\beta'}-v_{\beta''})},\]
and
\[ \tilde{J}(u,v):= a^{|B|-1} b^{|A|-1} \sum_{\alpha' \in A} \prod_{\alpha'' \neq \alpha'} \frac{\prod\limits_{\beta''\in B}(u_{\alpha''}-v_{\beta''})}{(u_{\alpha'}-u_{\alpha''})}, \]
which is nothing but $J$ with the roles of $P_x$ and $Q_x$ interchanged.
Then:
\begin{enumerate}\itemsep=0pt
\item[$(i)$] $J$ and $\tilde{J}$ differ (at most) by a sign, precisely $J(u,v):=(-)^{|A||B|-1} \tilde{J}(u,v)$.
\item[$(ii)$] $J$ is a polynomial in $(u,v)\in \mathbb{R}^{A \cup B}$.
\item[$(iii)$] The $\delta$-identity
\begin{gather} \label{eq:deltaident} \int_{\mathbb{R}} {\rm d}x\, \delta(P_x(u)) \otimes \delta(Q_x(v))= |J(u,v)| \delta(R(u,v)),
\end{gather}
$($involving $J$ as the multiplier$)$ holds, where for each $x\in \mathbb{R}$, $\delta(P_x(u))$ and $\delta(Q_x(v))$ are positive $\sigma$-finite measures on $\mathbb{R}^A$ and $\mathbb{R} ^B$ respectively and $\delta(R(u,v))$ is a positive $\sigma$-finite measure on $\mathbb{R}^{A\cup B}=\mathbb{R}^A \times\mathbb{R} ^B$.
\end{enumerate}
\end{proposition}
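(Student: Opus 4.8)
The plan is to first put the multiplier $J$ into a residue-friendly closed form, which settles (i) and (ii) quickly, and then to localize the measure identity (iii) on the smooth strata of $\{R=0\}$, where it collapses to a single algebraic identity about resultants.

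\textbf{Closed form and parts (i)--(ii).} Writing $P_x$, $Q_x$ for the one-variable polynomials (with $u,v$ kept implicit) and $'=\partial_x$, the identities $\prod_{\alpha}(v_{\beta''}-u_\alpha)=P_{v_{\beta''}}/a$ and $\prod_{\beta''\neq\beta'}(v_{\beta'}-v_{\beta''})=Q'_{v_{\beta'}}/b$ turn the definition of $J$ into
\[ J(u,v)=b^{|A|}\sum_{\beta'\in B}\frac{\prod_{\beta''\neq\beta'}P_{v_{\beta''}}}{Q'_{v_{\beta'}}}=b^{|A|}\Big(\prod_{\beta\in B}P_{v_\beta}\Big)\sum_{\beta'\in B}\operatorname{Res}_{z=v_{\beta'}}\frac{1}{P_zQ_z}, \]
and symmetrically for $\tilde J$ with $A,B$ exchanged. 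For (i) I would note that $\deg(P_zQ_z)=|A|+|B|\ge2$ makes the residue of $1/(P_zQ_z)$ at infinity vanish, so the residues over the zeros of $Q_z$ equal minus those over the zeros of $P_z$; combining this with $\prod_\beta P_{v_\beta}=(-1)^{|A||B|}(a^{|B|}/b^{|A|})\prod_\alpha Q_{u_\alpha}$ reproduces $J=(-)^{|A||B|-1}\tilde J$. For (ii) the only possible poles of $J$ lie on $\{v_{\beta_1}=v_{\beta_2}\}$; I would verify that the $\beta'=\beta_1$ and $\beta'=\beta_2$ terms have opposite residues there, since the factor $v_{\beta_1}-v_{\beta_2}$ changes sign while all other factors coincide in the limit, so every apparent pole is removable and $J$ is a polynomial.

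\textbf{Localization of (iii).} On the open set where the $u_\alpha$ and $v_\beta$ are pairwise distinct, $\operatorname{Grad}_u P_x$ is carried on each hyperplane $\{u_\gamma=x\}$ by its single nonzero component $-a\prod_{\alpha\neq\gamma}(x-u_\alpha)$, so the defining formula yields
\[ \delta(P_x(u))=\sum_{\gamma\in A}\frac{\delta(u_\gamma-x)}{|a|\prod_{\alpha\neq\gamma}|x-u_\alpha|},\qquad \delta(Q_x(v))=\sum_{\delta\in B}\frac{\delta(v_\delta-x)}{|b|\prod_{\beta\neq\delta}|x-v_\beta|}. \]
Tensoring, integrating $\int\mathrm{d}x$, and using $\int\mathrm{d}x\,\delta(u_\gamma-x)\delta(v_\delta-x)=\delta(u_\gamma-v_\delta)$ collapses the left-hand side of \eqref{eq:deltaident} to $\sum_{\gamma,\delta}\delta(u_\gamma-v_\delta)\big/\big(|a||b|\prod_{\alpha\neq\gamma}|u_\gamma-u_\alpha|\prod_{\beta\neq\delta}|u_\gamma-v_\beta|\big)$. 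On the right, on the stratum $\{u_\gamma=v_\delta\}$ the resultant factors as $R=(u_\gamma-v_\delta)g_{\gamma\delta}$ with $g_{\gamma\delta}=a^{|B|}b^{|A|}\prod_{(\alpha,\beta)\neq(\gamma,\delta)}(u_\alpha-v_\beta)\neq0$, so $\delta(R)=\delta(u_\gamma-v_\delta)/|g_{\gamma\delta}|$. Since these hyperplanes are distinct and the corresponding measures mutually singular, I would compare stratum by stratum; all common factors cancel and (iii) reduces to the algebraic claim that, on $\{u_\gamma=v_\delta\}$,
\[ |J(u,v)|=|a|^{|B|-1}|b|^{|A|-1}\!\!\prod_{\alpha\neq\gamma,\,\beta\neq\delta}\!\!|u_\alpha-v_\beta|, \]
the modulus of the resultant of the reduced polynomials $P_x/(x-u_\gamma)$ and $Q_x/(x-v_\delta)$.

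\textbf{The residual-resultant identity and the main obstacle.} Here the residue form is decisive: on $\{u_\gamma=v_\delta\}$ we have $P_{v_\delta}=0$, so in $J=b^{|A|}\sum_{\beta'}\big(\prod_{\beta''\neq\beta'}P_{v_{\beta''}}\big)/Q'_{v_{\beta'}}$ every term with $\beta'\neq\delta$ contains the vanishing factor $P_{v_\delta}$ and drops out, leaving only $\beta'=\delta$; evaluating that term and collecting the sign $(-1)^{|B|-1}(-1)^{(|A|-1)(|B|-1)}$ produced by reversing factors gives $J|_{u_\gamma=v_\delta}=\pm a^{|B|-1}b^{|A|-1}\prod_{\alpha\neq\gamma,\beta\neq\delta}(u_\alpha-v_\beta)$, which proves the displayed claim (and re-explains why only $|J|$, not $J$, can enter). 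The genuinely delicate point --- and the reason the statement is cast with positive $\sigma$-finite measures rather than distributions --- is the behaviour on the lower-dimensional loci where several roots collide (e.g.\ $u_\gamma=v_\delta=u_{\gamma'}$, or two independent coincidences at once): there $\operatorname{Grad} R$ degenerates and the surface-measure formula fails, so one must appeal to the extended definition of $\delta(f)$ both to see that these strata carry no extra mass and to justify that $\int\mathrm{d}x$ of the tensor product is a well-defined positive $\sigma$-finite measure for which the Fubini-type step above is legitimate. I expect this measure-theoretic bookkeeping, not the resultant algebra, to be the real obstacle.
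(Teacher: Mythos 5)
Your argument is correct and, for parts $(i)$ and $(iii)$, is essentially the paper's own proof in residue clothing: your closed form $J=b^{|A|}\big(\prod_{\beta}P_{v_\beta}\big)\sum_{\beta'}\operatorname{Res}_{z=v_{\beta'}}(P_zQ_z)^{-1}$ is equation~\eqref{eq:Jprod} with $S=PQ$, the vanishing of the residue at infinity is the paper's identity $\sum_\gamma 1/S'(w_\gamma)=0$, and the decomposition of both sides of~\eqref{eq:deltaident} over the hyperplanes $\{u_\gamma=v_\delta\}$, the key input $\int {\rm d}x\, \delta(u_\gamma-x)\delta(v_\delta-x)=\delta(u_\gamma-v_\delta)$, and the observation that only the $\beta'=\delta$ term of $J$ survives on $\{u_\gamma=v_\delta\}$ are exactly the paper's steps, down to the final ratio $|J|/|J_{\alpha,\beta}|=1/(|P'(v_\beta)||Q'(u_\alpha)|)$. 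Two points differ. For $(ii)$ the paper argues in one line from $(i)$: $J$ is manifestly polynomial in $u$ with rational coefficients in $v$, $\tilde J$ is manifestly polynomial in $v$ with rational coefficients in $u$, and $J=\pm\tilde J$; your direct cancellation of the apparent simple poles along $\{v_{\beta_1}=v_{\beta_2}\}$ is also valid (and independent of $(i)$), just longer. More importantly, the ``real obstacle'' you anticipate at the end does not exist in the paper's framework: there $\delta(P_x)$, $\delta(Q_x)$ and $\delta(R)$ are \emph{defined} (Definition~\ref{def:delta-prod}) as the weighted sums of hyperplane measures, the coincidence loci carrying zero mass by construction (extension by zero of the Gelfand--Shilov distribution off the singular set), so the formulas you establish ``on the open set where the roots are pairwise distinct'' are in fact the global definitions; the interchange of the finite sum with $\int{\rm d}x$ is Tonelli for positive measures, and the term-by-term comparison needs no appeal to mutual singularity --- one simply checks that the weights multiplying each $\delta(u_\gamma-v_\delta)$ agree on its support. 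The substantive content is precisely the resultant algebra you carried out, not the measure-theoretic bookkeeping.
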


The main point is the combinatorial identity equation~\eqref{eq:deltaident} to which we refer as the $\delta$-identity in the sequel. Before we can prove Proposition~\ref{prop:main}, we need to make sense of the $\delta$ functions that appear on each side of equation~\eqref{eq:deltaident}. This is the goal of the next section where they are defined as positive $\sigma$-finite measures. First, we make a remark on our motivations.

\begin{remark}
The $\delta$-identity is grounded in integral geometry but it proves useful in the following context occurring in probability theory or in statistical mechanics and which we describe informally using a physicist's notation.
Consider a random vector $X=(X_1,X_2,\dots, X_{d})$, taking its values in a compact convex set $K \subset \R^{d}$, with a density $\mu(x_2,\dots,x_{d})$ on $K$ which is independent of~$x_1$;
consider two functions~$P$ and~$Q$ on~$K$ that are polynomial in~$x_1$.
The joint probability density function (PDF) of $P(X)$ and $Q(X)$ may be written as
\begin{gather} \label{deldel}\rho(p,q) = \E( \delta(P(X)-p) \delta(Q(X)-q)) . \end{gather}
We rewrite equation~\eqref{deldel} as
\begin{gather}\label{deldel2} \rho(p,q) =\int {\rm d}x_2\cdots {\rm d}x_{d}\, \mu(x_2,\dots,x_{d})\int {\rm d}x_1\, \delta(P(x)-p) \delta(Q(x)-q)\mathbf{1}_{x\in K} . \end{gather}
Suppose that for given $(x_2,\dots,x_{d})\in \R^{d-1}$ either the interval $\{x
_1\in \R, (x_1,x_2,\dots,x_{d})\in K\}$ is empty, or it contains all the roots of the two polynomials $P(x)-p$ and $Q(x)-q$ of the variable $x_1$. Under this assumption, first one can dispense with the explicit mention of $\mathbf{1}_{x\in K}$ in equation~\eqref{deldel2} and then second one can use the identity equation~\eqref{eq:deltaident} to write the resulting $x_1$ integral over $\R$ in terms of $\delta(R)$, where $R$ is the resultant of $P(x)-p$ and $Q(x)-q$ in the variable~$x_1$, and of~$J$, the expression defined in Proposition~\ref{prop:main}.

This situation is encountered in the following particular case of
Horn's problem. Consider two $3\times 3$ real traceless symmetric matrices~$A$ and~$B$, acted upon by conjugation by matrices of SO(3).
We take $A$ and $B$ as random variables, independently and uniformly distributed on their orbits characterised by their eigenvalues $\alpha=(\alpha_1,\alpha_2,\alpha_3)$ and $\beta=(\beta_1,\beta_2,\beta_3)$. Then the characteristic polynomial of their sum $C=A+B$ is
\[\det(z\I_3-C)=\det\big(z\I_3 - \diag(\alpha) -\CR\diag(\beta)\CR^T\big)= z^3 + z P(\CR) + Q(\CR) , \]
where $\CR$ $\in$ SO(3) is taken randomly and uniformly according to the normalised Haar measure. Following equation~\eqref{deldel}, we write the PDF of $P$ and $Q$ in the form
\[ \rho(p,q) =\int D\CR \delta(P(\CR)-p) \delta(Q(\CR)-q) . \]
For example, if $\CR$ is expressed in terms of Euler angles $(\theta,\phi,\psi)$, a simple calculation shows that~$P(\CR)$ and~$Q(\CR)$ are (degree 2) polynomials in $c:=\cos\theta$, and
\begin{gather*}\rho(p,q) = \frac{1}{2\pi^2} \int_0^\pi {\rm d}\phi \int_0^\pi {\rm d}\psi \int_{-1}^1 {\rm d}c\, \delta(P(c,\phi,\psi)-p) \delta(Q(c,\phi,\psi)-q) . \end{gather*}
This is precisely where identity equation~\eqref{eq:deltaident} simplifies things a great deal, reducing the previous integral to a~$(\phi,\psi)$-integral of $\delta(R(\phi,\psi)) J$, where~$R$ is the resultant of the two polynomials $P-p$ and $Q-q$, and~$J$ is the Jacobian given above in Proposition~\ref{prop:main}. One of the two integrations over~$\phi$,~$\psi$ localizes on the zeros of $R$ belonging to the appropriate interval, and one is left with a single integral over the remaining variable. This method has been used in~\cite{CZ2} to analyse the divergences that (somewhat unexpectedly) arise in the PDF~$\rho(p,q) $. We refer the reader to that article for further details. That calculation, carried out in the simple case of two degree~2 polynomials, was the source of inspiration for the more general formula equation~\eqref{eq:deltaident}.
\end{remark}

\section{Preliminaries}

Mainly to fix notation, we recall the definition of the $\delta$ measure.

\begin{definition}
 The measure $\delta_x$, the unit mass concentrated at point $x \in \mathbb{R}^n$, is $\delta_x(M):=\mathbf{1}_{x\in M}$ for $M$ an arbitrary Borel subset of $\mathbb{R}^n$.
\end{definition}

\begin{remark} \label{rem:fdel}
An immediate consequence is that if $x \in \mathbb{R}^n$ and $\varphi\colon \mathbb{R}^n \to \mathbb{R}$ is any Borel function, then $\int_{\mathbb{R}^n} \delta_x \varphi =\varphi(x)$. Henceforth, if $f\colon \mathbb{R}^n \to \mathbb{R}$ is any Borel function, $f \delta_x= f(x) \delta_x$ where the left-hand side is the product of the measure $\delta_x$ by the function $f$ and the right-hand side the product of the measure~$\delta_x$ by the constant~$f(x)$.
\end{remark}

The measure $\delta_x$ is a positive $\sigma$-finite measure. Moreover, the measure of any bounded set is finite. Thus $\delta_x$ also defines a distribution: if $\varphi\colon \mathbb{R}^n\to \mathbb{R}$ is $C^\infty$ with compact support, then $\langle \delta_x,\varphi \rangle:= \varphi(x)$, where the bracket can be interpreted as the integral of $\varphi$ against the measure~$\delta_x$. This is of course the definition of the $\delta$ distribution found in any textbook.

The support of $\delta_x$ is a single point. Our aim is to define positive $\sigma$-finite measures whose support is a finite union of affine hyperplanes in~$\mathbb{R}^n$ and which are natural extensions of $\delta_x$. We do this in three steps. Our definition parallels closely the construction in~\cite{GS}. We shall briefly comment on the similarities and differences in Remark~\ref{rem:simdif}.

-- First step: we define $\delta(f)$ when $f\colon \mathbb{R} \to \mathbb{R}$ is a regular function. The one dimensional case is mainly for motivation and also facilitates later comparison with the construction in \cite{GS}.

\begin{definition} \label{def:fregdimone} A function $f\colon \mathbb{R} \to \mathbb{R}$ is called regular if $f$ is $C^1$ and~$f$,~$f'$ have no common zero.
\end{definition}

\begin{definition} \label{def:dimone} Let $f\colon \mathbb{R} \to \mathbb{R}$ be a regular function. The measure~$\delta(f)$ is defined by
\[ \delta(f):=\sum_{x\in \mathbb{R},\, f(x)=0} \frac{1}{|f'(x)|} \delta_x .\]
\end{definition}

If $f\colon \mathbb{R} \to \mathbb{R}$ is regular, the number of its zeros in any bounded set is finite (else the set of zeros would accumulate at a finite point, where necessarily $f=f'=0$) and the derivative is nonzero there. Thus $\delta(f)$ is a positive $\sigma$-finite measure such that the measure of any bounded Borel set is finite. Hence~$\delta(f)$ can also be seen as a distribution: if $\varphi\colon \mathbb{R}\to \mathbb{R}$ is $C^\infty$ with compact support, then~$\langle \delta(f),\varphi \rangle:= \sum_{x\in \mathbb{R},\, f(x)=0} \frac{1}{|f'(x)|} \varphi(x)$, a finite sum in fact (but the number of terms may depend on the support of $\varphi$).

\begin{remark} \label{rem:conv}
If $\theta_k\colon \mathbb{R} \to \mathbb{R}$, $k\in \mathbb{N}$ is a sequence of Borel functions such that the sequence of measures $\theta_k(t) \,{\rm d}t$ converges weakly to $\delta_0$ (that is, if $\lim\limits_{k\to +\infty} \int_{\mathbb{R}} \varphi(t)\theta_k(t) \, {\rm d}t = \varphi(0)$ for every continuous function $\varphi\colon \mathbb{R}\to \mathbb{R}$ with compact support), and if $f_{a,x}\colon \mathbb{R} \to \mathbb{R}$, $t\mapsto a (t-x)$ where $x\in \mathbb{R}$ and $a\in \mathbb{R}^*$ (the condition under which $f_{a,x}$ is regular), then $\delta(f_{a,x})=\frac{1}{|a|}\delta_x$ is the weak limit of the sequence of measures $\theta_k(f_{a,x}(t))\,{\rm d}t=\theta_k\circ f_{a,x} (t) \, {\rm d}t$. This explains the presence of the denominators in the definition of $\delta(f)$ and motivates that $\delta\circ f$ would also be an appropriate notation.
\end{remark}

The following simple result plays a crucial role:

\begin{lemma} If $f,g\colon \mathbb{R} \to \mathbb{R}$ are regular functions and the loci $f=0$ and $g=0$ do not intersect, then $h=fg$ is regular and
\begin{gather} \label{eq:deltaprod} \delta(fg)=\frac{1}{|f|}\delta(g) +\frac{1}{|g|}\delta(f).\end{gather}
More generally if $f_1,\dots,f_k$ are regular functions with no pairwise common zeros then the product $f_1\cdots f_k$ is regular and
\begin{gather} \label{eq:deltamultiprod} \delta(f_1\cdots f_k)=\sum_{j=1}^k \frac{1}{\Big|\prod\limits_{i\neq j} f_i\Big|}\delta(f_j). \end{gather}
\end{lemma}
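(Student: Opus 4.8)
The plan is to establish the two-factor identity \eqref{eq:deltaprod} directly from Definition~\ref{def:dimone}, and then to obtain the general formula \eqref{eq:deltamultiprod} by an easy induction on~$k$.

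First I would check that $h := fg$ is regular in the sense of Definition~\ref{def:fregdimone}. That $h$ is $C^1$ is immediate from the product rule $h' = f'g + fg'$. For the absence of common zeros of $h$ and $h'$, note that a zero $x$ of $h=fg$ is a zero of $f$ or of $g$, and by the disjointness of the loci $\{f=0\}$ and $\{g=0\}$ exactly one alternative occurs. If, say, $f(x)=0$ and $g(x)\neq 0$, then $h'(x)=f'(x)g(x)$; since $f$ is regular, $f(x)=0$ forces $f'(x)\neq 0$, and $g(x)\neq 0$, so $h'(x)\neq 0$. The symmetric case is identical, so $h$ and $h'$ never vanish simultaneously and $h$ is regular; in particular $\delta(fg)$ is defined.

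Next I would compute $\delta(fg)$ by splitting its defining sum (Definition~\ref{def:dimone}) over the disjoint union $\{fg=0\}=\{f=0\}\sqcup\{g=0\}$. On the part indexed by the zeros $x$ of $f$ one has $g(x)\neq 0$ and $(fg)'(x)=f'(x)g(x)$, whence $|(fg)'(x)|=|f'(x)|\,|g(x)|$ and
\[ \frac{1}{|(fg)'(x)|}\,\delta_x = \frac{1}{|g(x)|}\cdot\frac{1}{|f'(x)|}\,\delta_x = \frac{1}{|g|}\cdot\frac{1}{|f'(x)|}\,\delta_x, \]
the last step being the rule $f\delta_x=f(x)\delta_x$ of Remark~\ref{rem:fdel} applied to the Borel function $1/|g|$ (which is finite at $x$ since $g(x)\neq 0$). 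Summing over the zeros of $f$ produces $\frac{1}{|g|}\delta(f)$, and symmetrically the zeros of $g$ contribute $\frac{1}{|f|}\delta(g)$; adding the two gives \eqref{eq:deltaprod}. I would stress here that the products $\frac{1}{|g|}\delta(f)$ and $\frac{1}{|f|}\delta(g)$ are honestly defined as functions-times-measures precisely because $\delta(f)$ is supported on $\{f=0\}$, where $g$ --- hence $1/|g|$ --- is finite, and vice versa.

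Finally, the $k$-fold identity follows by induction, writing $f_1\cdots f_k=(f_1\cdots f_{k-1})\,f_k$. The induction hypothesis gives that $f_1\cdots f_{k-1}$ is regular, and its zero locus is $\bigcup_{i<k}\{f_i=0\}$, which is disjoint from $\{f_k=0\}$ by the no-pairwise-common-zeros assumption; so \eqref{eq:deltaprod} applies to the pair $(f_1\cdots f_{k-1},\,f_k)$. Expanding $\delta(f_1\cdots f_{k-1})$ by the inductive formula and absorbing the factor $1/|f_k|$ into each denominator $1/\big|\prod_{i\neq j,\,i<k}f_i\big|$ to form $1/\big|\prod_{i\neq j}f_i\big|$, together with the leftover term $\frac{1}{|f_1\cdots f_{k-1}|}\delta(f_k)=\frac{1}{|\prod_{i\neq k}f_i|}\delta(f_k)$, assembles exactly the right-hand side of \eqref{eq:deltamultiprod}. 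The only point requiring genuine care --- rather than bookkeeping --- is the verification that $h=fg$ is regular and, correspondingly, that each function $1/\big|\prod_{i\neq j}f_i\big|$ is finite on the support $\{f_j=0\}$ of the measure it multiplies; everything else is a routine rearrangement of finite sums.
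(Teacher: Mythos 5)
Your proposal is correct and follows essentially the same route as the paper: the identity is read off from Definition~\ref{def:dimone} by splitting the zero set of $fg$ into the disjoint zero sets of $f$ and of $g$, using $(fg)'=f'g$ at a zero of $f$ (resp.\ $fg'$ at a zero of $g$) together with Remark~\ref{rem:fdel}, and the $k$-fold formula is obtained by recursion. You spell out the regularity of $h=fg$ and the finiteness of the multiplying functions on the relevant supports more explicitly than the paper does, but the argument is the same.
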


\begin{proof} For the first formula, the term $\frac{1}{|f|}\delta(g)$ on the right-hand side for instance is well defined as the product of a measure $\delta(g)$ by a measurable function $\frac{1}{|f|}$ which is finite in a neighborhood of the support of~$\delta(g)$. Then the validity of the first formula is a consequence of $(fg)'(x) =f(x)g'(x)$ when~$x$ is a zero of~$g$, and of Remark~\ref{rem:fdel}. The second formula can be justified in a similar way or from the first by recursion.
\end{proof}

-- Second step: we define $\delta(f)$ for non-constant affine functions in arbitrary dimensions.

\begin{definition}\label{def:aff} If $f\colon \mathbb{R}^n\to \mathbb{R}$ is a non-constant affine function then there is a non-constant linear form $l_1\neq 0$ on $\mathbb{R}^n$ and a constant $x\in\mathbb{R}$ such that $f=l_1-x$. One can find $n-1$ other linear forms $l_2,\dots,l_n$ such that $(l_1,\dots,l_n)$ is a basis of linear forms on~$\mathbb{R}^n$. Consequently $z\colon \mathbb{R}^n \to \mathbb{R}^n$, $y:=(y_1,\dots,y_n) \mapsto z(y)$ where $z(y):=(l_1(y),\dots,l_n(y))$ defines a non-singular linear change of coordinates. If~$M$ is a~Borel subset of $\mathbb{R}^n$, we define $\widetilde{M}\subset\mathbb{R}^{n-1}$ by $z(M) \cap \{x\}\times\mathbb{R}^{n-1}=: \{x\}\times\widetilde{M} $.

We define the measure $\delta(f)$ by
\[ \delta(f)(M):= \int_{\widetilde{M}} |\operatorname{Jac}(y,z)| \, {\rm d}z_2\cdots {\rm d}z_n,\]
where $\operatorname{Jac}(y,z):=\det \big(\frac{\partial y_i}{\partial z_j}\big)_{ij}=1/\operatorname{Jac}(z,y)$ is the standard Jacobian, a constant that could be pulled out of the integral.
\end{definition}

The fact that the definition of $\delta(f)$ is intrinsic, i.e., that $\delta(f)(M)$ is independent on how~$l_1$ is completed into a basis of linear forms, is guaranteed by the multiplicativity of the Jacobian determinant. We shall compare this definition of $\delta(f)$ with its analog defined by Gelfand and Shilov in the context of generalized functions in Remark~\ref{rem:simdif}. For the moment, let us simply observe that~$\delta(f)$ is invariant under translations along the hyperplane $\{l_1=0\}$ and its support is $\{f=0\}$. Moreover, if $n=1$ we may write $f=f_{a,x/a}$ as in Remark~\ref{rem:conv} and a simple computation shows that the definitions of $\delta(f)$ in Definitions~\ref{def:dimone} and~\ref{def:aff} coincide.

An immediate consequence of the definition is that if $\varphi\colon \mathbb{R}\to \mathbb{R}$ is a positive Borel function then
 \[ \int_{\mathbb{R}^n} \delta(f) \varphi = \int_{\mathbb{R}^{n-1}} \tilde{\varphi}(x,z_2,\dots,z_n) |\operatorname{Jac}(y,z)| \, {\rm d}z_2\cdots {\rm d}z_n, \]
where $\tilde{\varphi}:=\varphi \circ z^{-1}$.

If $M$ is a bounded Borel subset of $\mathbb{R}^n$, then $\widetilde{M}$ is also bounded, so the positive measure $\delta(f)$ is such that $\delta(f)(M)$ is finite. Hence $\delta(f)$ can also be seen as a distribution: if $\varphi\colon \mathbb{R}\to \mathbb{R}$ is~$C^\infty$ with compact support, then{\samepage
\begin{gather} \label{eq:aff} \langle \delta(f), \varphi \rangle := \int_{\mathbb{R}^{n-1}} \tilde{\varphi}(x,z_2,\dots,z_n) |\operatorname{Jac}(y,z)| \, {\rm d}z_2\cdots {\rm d}z_n, \end{gather}
where $\tilde{\varphi}:=\varphi \circ z^{-1}$ is again $C^\infty$ with compact support.}

If $f\colon \mathbb{R}^n\to \mathbb{R}$ is a non-constant affine function and $a\in \mathbb{R}^*$ then~$af$ is a non-constant affine function and $\delta(af)=\frac{1}{|a|} \delta(f)$.

-- Third and final step: we define~$\delta(f)$ for certain products of non-constant affine functions.

\begin{definition} Let $\mathcal{H}(\mathbb{R}^n)$ be the space of functions $f\colon \mathbb{R}^n\to \mathbb{R}$ that can be written as a product $f=f_1\cdots f_k$ of non-constant, pairwise non-proportional, affine functions $f_1,\dots,f_k$ from~$\mathbb{R}^n$ to~$\mathbb{R}$.

We say that $(f_1,\dots,f_k)$ is a factorization of $f\in \mathcal{H}(\mathbb{R}^n)$.
\end{definition}

We note that generically $f_i$ and $f_j$, $i\neq j$ have common zeros (an affine subspace of dimension $n-2$ unless the $0$ loci are parallel hyperplanes) so that $\{f=0\}$ is not a smooth submanifold of~$\mathbb{R}^n$.

\begin{definition} \label{def:delta-prod} Suppose that $f\in \mathcal{H}(\mathbb{R}^n)$ and let $(f_1,\dots,f_k)$ be a factorization of~$f$. We define
\[ \delta(f):=\sum_{j=1}^k \frac{1}{\Big|\prod\limits_{i\neq j} f_i\Big|}\delta(f_j)\]
with $\delta(f_j)$ as in Definition~\ref{def:aff}.
\end{definition}

It is readily seen that the right-hand side in the definition is still a positive measure on $\mathbb{R}^n$, and it is $\sigma$-finite because $\Big\{\prod_{i\neq j} f_i=0\Big\}$ is a finite union of affine subspaces of dimension $n-2$ in the hyperplane $f_j=0$ which is the support of $\delta(f_j)$.

We have to check that $\delta(f)$ is defined intrinsically:

\begin{lemma}For $f\in \mathcal{H}(\mathbb{R}^n)$, the right-hand side in the definition of $\delta(f)$ does not depend on the factorization of~$f$.
\end{lemma}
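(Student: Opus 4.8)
The plan is to first classify all factorizations of a given $f\in\mathcal{H}(\mathbb{R}^n)$, and then to check that the proposed expression is unchanged under passing from one to another. Since each non-constant affine factor $f_i$ is a degree-one polynomial in $y_1,\dots,y_n$, it is irreducible in the unique factorization domain $\mathbb{R}[y_1,\dots,y_n]$; moreover the hypothesis that the $f_i$ are pairwise non-proportional says exactly that no irreducible factor is repeated. Uniqueness of factorization into irreducibles then tells us that any two factorizations $(f_1,\dots,f_k)$ and $(g_1,\dots,g_l)$ of the same $f$ have $k=l$ and, after a permutation $\sigma$ of the indices, satisfy $g_j=c_j f_{\sigma(j)}$ for nonzero constants $c_j$. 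Comparing the two products $f=\prod_j f_j=\prod_j g_j$ forces the normalization $\prod_{j=1}^k c_j=1$. (Equivalently, and more geometrically, $\{f=0\}$ is a finite union of hyperplanes whose individual components are intrinsic, each component being cut out by an affine function unique up to a nonzero scalar.) Thus it suffices to verify invariance of the defining sum under two operations: (a) permutations of the factors, and (b) a common rescaling $(f_j)\mapsto(c_j f_j)$ subject to $\prod_j c_j=1$.

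Invariance under (a) is immediate: the sum $\sum_{j=1}^k \frac{1}{|\prod_{i\neq j}f_i|}\delta(f_j)$ is manifestly symmetric in the factors, since reindexing merely permutes its summands without altering them.

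For (b), I would show that each summand is individually invariant. Write $g_j=c_j f_j$ with $\prod_{i}c_i=1$. Then $\prod_{i\neq j}c_i=1/c_j$, so as functions $\prod_{i\neq j}g_i=\frac{1}{c_j}\prod_{i\neq j}f_i$, and hence $\frac{1}{|\prod_{i\neq j}g_i|}=|c_j|\,\frac{1}{|\prod_{i\neq j}f_i|}$. On the other hand, the scaling property $\delta(af)=\frac{1}{|a|}\delta(f)$ already established for affine functions gives $\delta(g_j)=\frac{1}{|c_j|}\delta(f_j)$. Multiplying these, the factors $|c_j|$ and $1/|c_j|$ cancel, so that $\frac{1}{|\prod_{i\neq j}g_i|}\delta(g_j)=\frac{1}{|\prod_{i\neq j}f_i|}\delta(f_j)$ term by term; summing over $j$, and combining with (a) to absorb the permutation $\sigma$, yields equality of the two expressions for $\delta(f)$.

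The computation itself is short; the conceptual heart of the argument, and the only place where something must genuinely be proved, is the classification of factorizations — that any two differ only by a permutation and by scalars with product one. This is precisely the feature that makes the cancellation in step (b) work, and it reflects the fact that the definition has been arranged so that the weight attached to each hyperplane depends only on that hyperplane and on $f$, and not on the chosen normalization of the affine functions cutting it out. I do not expect any real obstacle beyond invoking unique factorization cleanly, since the scaling law for $\delta$ of affine functions is already available.
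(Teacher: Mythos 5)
Your proposal is correct and follows essentially the same route as the paper: identify that any two factorizations differ by a permutation and by scalars $\lambda_j$ with $\prod_j\lambda_j=1$, then cancel $|\lambda_j|$ against $1/|\lambda_j|$ term by term using $\delta(\lambda f)=\frac{1}{|\lambda|}\delta(f)$. The only difference is that you spell out the classification of factorizations via unique factorization in $\mathbb{R}[y_1,\dots,y_n]$, which the paper simply asserts.
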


\begin{proof} If $(g_1,\dots,g_l)$ is another factorization of $f\in \mathcal{H}(\mathbb{R}^n)$ then $l=k$, there is a unique permutation $\sigma$ of $\llbracket 1,k\rrbracket$ and unique $\lambda_1,\dots,\lambda_k \in \mathbb{R}^*$ such that $g_{\sigma(j)}=\lambda_j f_j$ for $j=1,\dots,n$. Moreover $\lambda_1\cdots\lambda_k=1$. We observe that for each $j\in \llbracket 1,k\rrbracket$
\[ \frac{1}{\Big|\prod\limits_{i\neq \sigma(j)} g_i\Big|}\delta(g_{\sigma(j)})=\frac{1}{\Big|\prod\limits_{i\neq j} f_i\Big|}\delta(f_{j})\]
because $\delta(\lambda_jf_{j})=\frac{1}{|\lambda_j|}\delta(f_{j})$ and $\lambda_1\cdots\lambda_k=1$, establishing the claim that $\delta(f)$ is defined intrinsically.
\end{proof}

The space $\mathcal{H}(\mathbb{R})$ is simply the space of non-constant real polynomials whose roots are real and simple. Thus in dimension~$1$, the condition that the affine functions in the factorization be non-proportional is equivalent to the condition that $f$ be regular. It is immediate to check that for $f\in \mathcal{H}(\mathbb{R})$, the definitions of $\delta(f)$ in Definitions~\ref{def:dimone} and~\ref{def:delta-prod} coincide.

If $f$, $g$ and $fg$ belong to $\mathcal{H}(\mathbb{R}^n)$, it is immediate to check that
\begin{gather*} \delta(fg)=\frac{1}{|f|}\delta(g) +\frac{1}{|g|}\delta(f).\end{gather*}
which is formally identical to equation~\eqref{eq:deltaprod}. The extension to a product of a finite number of factors leads to a counterpart of equation~\eqref{eq:deltamultiprod}.

In fact, once $\delta(f)$ is defined when $f$ is a non-constant affine function, the definition we have given of $\delta(f)$ for $f\in \mathcal{H}(\mathbb{R}^n)$ could be seen as a formal extension of the identity equation~\eqref{eq:deltamultiprod} in this more general context. However, there is one important difference between the two situations:

If $f\in \mathcal{H}(\mathbb{R}^n)$, the $\delta(f)$-measure of a compact set is clearly finite if it does not contain common zeros of $f_i$ and $f_j$ for any $i\neq j$, but may be infinite otherwise. Similarly, positive $C^\infty$ functions with compact support may have an infinite integral against $\delta(f)$. As an illustration, take $n=2$, $f(y)=y_1y_2$, and $\varphi\colon \mathbb{R}^2\to \mathbb{R}$ a $C^\infty$ function, positive with compact support and such that $\varphi(0,0)>0$ (i.e., $\varphi$ does not vanish in a neighborhood of the intersection $y_1=y_2=0$). Then
\[ \int_{\mathbb{R}^2}\delta(f) \varphi =\int_{\mathbb{R}} \varphi(0,y_2) \frac{{\rm d}y_2}{|y_2|} + \int_{\mathbb{R}} \varphi(y_1,0) \frac{{\rm d}y_1}{|y_1|}=+\infty .\]
This is the reason why for a generic $f\in \mathcal{H}(\mathbb{R}^n)$, $n>1$, the measure~$\delta(f)$ cannot be interpreted as a distribution and distribution theory on~$\mathbb{R}^n$ cannot be the natural setting.

It will be clear from the discussion in Remark~\ref{rem:simdif} below that a positive $\sigma$-finite measure $\delta(f)$ can be defined for more general~$f$'s, but
each of the functions $P_x$, $Q_x$, $R$ involved in equation~\eqref{eq:deltaident} is the product of pairwise non-proportional affine functions on some~$\mathbb{R}^n$, so our construction is enough to make sense of $\delta(P_x)$, $\delta(Q_x)$, $\delta(R)$, and we have achieved the preliminary goal of making sense of both sides of the $\delta$-identity equation~\eqref{eq:deltaident}.

We apply the definitions to an explicit computation illustrating their naturalness. In what follows, $a$~is always a fixed parameter. Recall that $P_x(u):=a\prod_{\alpha \in A}(x-u_\alpha)$, a notation suggesting that~$x$ is a (real) parameter and $u\in \mathbb{R}^A$ is the variable. But we could also denote the same expression by $P_u(x)$, a notation suggesting that $u$ is a (vector) parameter and $x\in \mathbb{R}$ is the variable. Finally, we could denote the same expression by $P(x,u)$, a function on $\mathbb{R} \times \mathbb{R}^A$. No matter how we split between parameters and variables, $\prod_{\alpha \in A}(x-u_\alpha)$ is a product of $|A|$ affine functions. We compare $\delta(P_x)$, $\delta(P_u)$ and $\delta(P)$.

-- The function $P_x$ is a product of $|A|$ pairwise non-proportional affine functions of $u\in \mathbb{R}^{A}$. We may use $(u_{\alpha'}-x\delta_{\alpha,\alpha'})_{\alpha'\in A}$ as coordinates near the zero locus of $x-u_\alpha$. We infer from the definitions that, if $\varphi$ is a positive Borel function on~$\mathbb{R}^A$,
\begin{gather} \label{eq:deltaPx} \int \delta(P_x) \varphi = \frac{1}{|a|} \sum_{\alpha\in A} \int_{\mathbb{R}^{A \setminus \{\alpha\}}} \frac{1}{\prod\limits_{\alpha'\neq \alpha} |x-u_{\alpha'}|} \varphi(u)_{|u_\alpha=x} \prod_{\alpha'\neq \alpha } {\rm d}u_{\alpha'}.\end{gather}
This equation (and its counterpart for $Q_x$) will be the starting point of our proof of the $\delta$-identity equation~\eqref{eq:deltaident}.

-- The polynomial (in $x$) $P_u(x)$ is regular if and only if the $u_\alpha$s are pairwise distinct, which is also the condition under which the affine functions $f_\alpha(x):=x-u_\alpha$ of $x$ are pairwise non-proportional and all definitions of $\delta(P_u)$ lead to
\[ \int \delta(P_u) \varphi = \frac{1}{|a|} \sum_{\alpha\in A} \frac{1}{\prod\limits_{\alpha'\neq \alpha} |u_\alpha-u_{\alpha'}|} \varphi(u_\alpha) = \sum_{\alpha\in A} \frac{1}{|P'_u(u_\alpha)|} \varphi(u_\alpha)\]
for any positive Borel function $\varphi$ on $\mathbb{R}$, where $P'_u$ denotes the derivative of $P_u$ with respect to~$x$.

-- The function $P$ is a product of $|A|$ pairwise non-proportional affine functions
on $\mathbb{R} \times \mathbb{R}^A$. We infer from the definitions that if $\varphi$ is a positive Borel function on that product space
\[ \int \delta(P) \varphi =\frac{1}{|a|} \sum_{\alpha\in A} \int_{\mathbb{R}^A} \frac{1}{\prod\limits_{\alpha'\neq \alpha} |u_{\alpha}-u_{\alpha'}|} \varphi(u_\alpha;(u_{\alpha'})_{\alpha'\in A}) \prod_{\alpha'\in A} {\rm d}u_{\alpha'},\]
or equivalently
\[ \int \delta(P) \varphi = \frac{1}{|a|} \sum_{\alpha\in A} \int_{\mathbb{R}\times \mathbb{R}^{A \setminus \{\alpha\}}} \frac{1}{\prod\limits_{\alpha'\neq \alpha} |x-u_{\alpha'}|} \varphi(x;(u_{\alpha'})_{\alpha'\in A})_{|u_\alpha=x} {\rm d}x \prod_{\alpha'\neq \alpha } {\rm d}u_{\alpha'}.\]
The first formula emerges if one uses $(x-u_\alpha;(u_\alpha')_{\alpha'\in A})$ while the second formula emerges if one uses $(x;(u_\alpha'-x\delta_{\alpha,\alpha'})_{\alpha'\in A})$ as coordinates on $\mathbb{R} \times \mathbb{R}^A$ near the zero locus of $x-u_\alpha$.

\begin{remark}
 It is clear that all these formul\ae\ are pretty much one and the same, though the one for $P_u$ involves the restriction that the components of the vector $u$ be pairwise distinct (unless we are prepared to deal with non $\sigma$-finite measures).
\end{remark}

We finally turn to a comparison of the construction given above with the one in~\cite{GS}. The following cursory discussion is used nowhere in the sequel and the uninterested reader can jump directly to Section~\ref{sec:mainproof}. The discussion involves the generalization of the notion of regular function for functions from $\mathbb{R}^n$ to $\mathbb{R}$.

\begin{definition} A function $f\colon \mathbb{R}^n \to \mathbb{R}$ is called regular if $f$ is $C^1$ and its gradient $\operatorname{Grad} f$ vanishes nowhere on $f=0$.
\end{definition}

\begin{remark} \label{rem:simdif} In \cite{GS}, Gelfand and Shilov define $\delta(f)$ (which we denote by $\delta^{\rm GS}$ to make later comparison with our definitions easier and avoid confusion) as a generalized function when $f\colon \mathbb{R}^n \to \mathbb{R}$ is an arbitrary regular function. We have already stressed that this is not an auto\-matic consequence of the definition of $\delta_0$ (either as a measure or as a distribution). However (maybe modulo the inclusion of absolute values of Jacobians, see the first point below) the definition chosen by Gelfand and Shilov is the ``only'' natural one, the one that can be manipulated most intuitively. For instance, it behaves nicely if some variables in $f$ are treated as parameters. We have seen an instance of this in our context when we compared $\delta(P_x)$, $\delta(P_u)$ and~$\delta(P)$.

 Another view of the naturalness properties of the definition of $\delta^{\rm GS}(f)$ is that if $\theta_k\colon \mathbb{R}\to \mathbb{R}$, $k\in \mathbb{N}$ is a sequence of smooth functions converging towards $\delta_0$ in the distribution topology (on~$\mathbb{R}$) then the compositions $\theta_k(f)=\theta_k\circ f$ converge towards $\delta(f)$ in the distribution topology (on~$\mathbb{R}^n$). We have seen a trivial counterpart of this in Remark~\ref{rem:conv}. The ``approximation of distribution by smooth maps'' procedure is in fact the path followed by H\"ormander in \cite[Chapter~6 and Section~8.2]{H} to define the composition of a general distribution on $\mathbb{R}^m$ with functions from $\mathbb{R}^n$ to $\mathbb{R}^m$ satisfying appropriate conditions (surjectivity of the differential everywhere in Chapter~6, less stringent conditions allowing certain singularities in Section~8.2).

The reader is invited to consult~\cite{GS} for the detailed definition of $\delta^{\rm GS}$ and other distributions localized of $f=0$. Coming back to the comparison, one can check the following:

-- First, in one dimension, $\delta^{\rm GS}(f)$ is the distribution associated to the measure~$\delta(f)$ from Definition~\ref{def:dimone}, with one little proviso: Gelfand and Shilov do not include an absolute value for the derivatives in the denominators. More generally, they do not include absolute values for Jacobians because they use the framework of differential forms, whereas densities better fit our needs. When we talk of~$\delta^{\rm GS}$ in the sequel, we always have in mind that absolute values are included.

-- Second, an affine function $f\colon \mathbb{R}^n \to \mathbb{R}$ is regular if and only if it is non-constant, and then $\delta^{\rm GS}(f)$ is the distribution associated to the measure $\delta(f)$ from Definition~\ref{def:aff}. In fact, the definition of $\delta^{\rm GS}(f)$ for a general regular function follows the same pattern, using localization and replacing the linear change of coordinates by the implicit function theorem. Borrowing the notations from Definition~\ref{def:aff}, $z_1=f(y),z_2,\dots,z_n$ become arbitrary local coordinates, the Jacobian is not a constant anymore and the formula for $\langle \delta^{\rm GS}(f), \varphi \rangle$ is similar to equation~\eqref{eq:aff}, but with $\operatorname{Jac}(y,z)$ replaced by $\operatorname{Jac}(y,z)_{z_1=0}$. Using a partition of unity, one may assume that the local coordinates are well-defined in an open set containing the support of $\varphi$ so that $\tilde{\varphi}=\varphi \circ z^{-1}$ extends as a $C^\infty$ function with compact support. The consistency of the procedure is guaranteed by the general change of variable formula. We refer again to~\cite{GS} for all details.

-- Third, the identity equation~\eqref{eq:deltaprod} (with $\delta^{\rm GS}$ substituted for $\delta$, and interpreted as an identity between distributions) remains true whenever $g,f,fg\colon \mathbb{R}^n \to \mathbb{R}$ are regular. Then the analog of equation~\eqref{eq:deltamultiprod} also holds. Thus, if $f \in \mathcal{H}(\mathbb{R}^n)$ is regular, $\delta^{\rm GS}(f)$ is the distribution associated to the measure $\delta(f)$ from Definition \ref{def:delta-prod}. It is immediate that an element of $\mathcal{H}(\mathbb{R}^n)$ is regular if and only if all factors in a factorization have the same linear part but pairwise distinct constant terms. As already noticed, all members of $\mathcal{H}(\mathbb{R})$ are regular, but generic members of $\mathcal{H}(\mathbb{R}^n)$, $n>1$, are not. For such members $f$, the measure $\delta(f)$ assigns an infinite measure to certain bounded Borel subsets of $\mathbb{R}^n$, and $\delta(f)$ cannot be interpreted as a distribution.

-- Fourth, however, every $f\in \mathcal{H}(\mathbb{R}^n)$ with factorization $(f_1,\dots,f_k)$ is regular on a dense connected open subset of $\mathbb{R}^n$, namely the complement $O_f$ of $\cup_{i\neq j} \{f_i=f_j=0\}$. The Gelfand--Shilov construction defines a distribution (which we still denote by $\delta^{\rm GS}(f)$) on $O_f$, which is positive on positive test functions (with support in $O_f$ by definition), hence defines a positive $\sigma$-finite measure on $O_f$, and on $O_f$ equation~\eqref{eq:deltamultiprod} again holds. Hence this measure coincides with $\delta(f)$ (as defined in Definition \ref{def:delta-prod}) on Borel subsets of $O_f$. Consequently $\delta(f)$ can also be interpreted as an extension of (the positive $\sigma$-finite measure on $O_f$ associated to) $\delta^{\rm GS}(f)$ to the whole of $\mathbb{R}^n$. The ambiguity is a positive $\sigma$-finite measure with support in $\cup_{i\neq j} \{f_i=f_j=0\}$, and Definition \ref{def:delta-prod} consists in extending by the zero measure. We shall not try to explore whether some other extensions would preserve the nice properties of the definition of $\delta(f)$, in particular whether some generalized version of equation~\eqref{eq:deltaident} could be obtained in that way. Finally we note that extension by the zero measure could be used much more generally to extend a (positive $\sigma$-finite) measure defined on some Borel subset of a measurable space to the whole space, as is customary done for the similar construction of extensions for distributions.
\end{remark}

\section{Proof of Proposition \ref{prop:main}} \label{sec:mainproof}

\begin{proof} To avoid clumsy notation, we write simply $P(x)$ (resp.\ $Q(x)$) for $P_u(x)$ (resp.\ $Q_v(x)$) in this proof.

We start with the proof of $(i)$. Recall that
\[ J(u,v)=a^{|B|-1} b^{|A|-1} \sum_{\beta' \in B} \prod\limits_{\beta'' \neq \beta'}
 \frac{\prod\limits_{\alpha''\in A}(v_{\beta''}-u_{\alpha''})}{(v_{\beta'}-v_{\beta''})}.\]

Rewrite
\begin{gather*} a^{|B|-1} b^{|A|-1}\!\prod_{\beta'' \neq \beta'}\frac{\prod\limits_{\alpha''\in A}(v_{\beta''}-u_{\alpha''})}{(v_{\beta'}-v_{\beta''})} = b^{|A|-1} \prod_{\beta''\ne \beta'}\frac{P(v_{\beta''})}{(v_{\beta'}-v_{\beta''})} = \frac{b^{|A|}}{P(v_{\beta'})Q'(v_{\beta'})} \prod_{\beta''\in B}P(v_{\beta''})\end{gather*}
and notice that if $S(x):=P(x)Q(x)$, then $P(v_{\beta'})Q'(v_{\beta'})=S'(v_{\beta'})$ to conclude that
\begin{gather} J(u,v) = b^{|A|} \left(\sum_{\beta'\in B} \frac{1}{S'(v_{\beta'})}\right) \prod_{\beta''\in B} P(v_{\beta''})
= b^{|A|}a^{|B|}\left(\sum_{\beta'\in B} \frac{1}{S'(v_{\beta'})}\right) \prod_{\alpha'' \in A \atop \beta''\in B} (v_{\beta''}-u_{\alpha''}).\!\!\! \label{eq:Jprod}\end{gather}

The same manipulation on
\[ \tilde{J}(u,v)= a^{|B|-1} b^{|A|-1} \sum_{\alpha' \in A} \prod_{\alpha'' \neq \alpha'} \frac{\prod\limits_{\beta''\in B}(u_{\alpha''}-v_{\beta''})}{(u_{\alpha'}-u_{\alpha''})}\]
leads to
\begin{gather} \tilde{J}(u,v) = a^{|B|} \left(\sum_{\alpha''\in A} \frac{1}{S'(u_{\alpha'})}\right)\! \prod_{\alpha''\in A} \! Q(v_{\alpha''}) = a^{|B|}b^{|A|}\left(\sum_{\alpha''\in A} \frac{1}{S'(u_{\alpha'})}\right) \! \prod_{\beta'' \in B\atop \alpha''\in A}\! (u_{\alpha''}-v_{\beta''}).\!\!\! \label{eq:tJprod}\end{gather}
But if $S=c\prod_{\gamma \in C} (x-w_\gamma)$ is an arbitrary polynomial of degree at least $2$ with simple zeros, one has
the ``well known'' identity\footnote{For instance, note that $\frac{1}{S(x)}=\sum_{\gamma \in C} \frac{1}{S'(w_\gamma)}\frac{1}{x-w_\gamma}$ on the one hand, and $\frac{1}{S(x)}=0\big(1/x^2\big)$ at infinity.} $\sum_{\gamma \in C} \frac{1}{S'(w_\gamma)}=0$. For the polynomial $S$ at hand this entails
\[ \sum_{\alpha' \in A} \frac{1}{S'(u_{\alpha'})}=-\sum_{\beta' \in B}\frac{1}{S'(v_{\beta'})}.\]
Comparison of equations~\eqref{eq:Jprod} and~\eqref{eq:tJprod} leads immediately to $J=(-)^{|A||B|-1}\tilde{J}$ concluding the proof of~$(i)$.

To prove $(ii)$ we simply note that the original formula for $J$ is a polynomial in~$u$ with rational coefficients in $v$ while the
original formula for $\tilde{J}$ is a polynomial in $v$ with rational coefficients in~$u$. As~$J$ and $\tilde{J}$ differ (at most) by a sign, altogether $J$ is a polynomial in $(u,v)$.

We turn finally to the proof of $(iii)$. We rewrite equation~\eqref{eq:deltaPx} in a number of more compact forms, each of which we shall use freely, namely
\begin{gather*} \delta(P_x) = \frac{1}{|a|} \sum_{\alpha\in A} \frac{1}{\prod\limits_{\alpha'\neq \alpha} |x-u_{\alpha'}|} \delta_x(u_\alpha) \prod_{\alpha'\neq \alpha } {\rm d}u_{\alpha'}= \sum_{\alpha\in A} \frac{1}{|P'(u_\alpha)|} \delta_x(u_\alpha) \prod_{\alpha'\neq \alpha } {\rm d}u_{\alpha'}\\
\hphantom{\delta(P_x)}{} = \frac{1}{|a|} \sum_{\alpha\in A} \frac{1}{\prod\limits_{\alpha'\neq \alpha} |x-u_{\alpha'}|} \delta(u_\alpha-x)= \sum_{\alpha\in A} \frac{1}{|P'(u_\alpha)|}\delta(u_\alpha-x).
\end{gather*}
In the first line, $\delta_x$ is the one-dimensional $\delta$ function for a unit mass at point $x$ along the coordinate axis $\alpha$, while in the second line $\delta(u_\alpha-x)$ is a distribution on $\mathbb{R}^A$, namely $\delta(f)$ for the affine function $f\colon \mathbb{R}^A \to \mathbb{R}$, $u \mapsto f(u):=u_\alpha-x$. Of course, the definition of~$\delta(f)$ yields $\delta(u_\alpha-x)=\delta_x(u_\alpha)\prod_{\alpha'\neq \alpha } {\rm d}u_{\alpha'}$. To get the second equality in each line, we have used that $a\prod_{\alpha'\neq \alpha} (x-u_{\alpha'})$ and $a\prod_{\alpha'\neq \alpha} (u_\alpha-u_{\alpha'})=P'_u(u_\alpha)$ are equal on the support of $\delta_x(u_\alpha)$ or of $\delta(u_\alpha-x)$.

Now $\delta(P_x)\delta(Q_x)$, which could be rewritten more carefully as $\delta(P_x)\otimes \delta(Q_x)$, is well defined as the product of two measures, $\delta(P_x)$ on $\mathbb{R}^A$ and $\delta(Q_x)$ on $\mathbb{R}^B$. When $P$ and $Q$ are monic of degree $1$, $P=(x-u)$, $Q=(x-v)$,
\[ \int {\rm d}x \int \varphi(u,v) \delta_x(u) \delta_x(v)= \int {\rm d}x \, \varphi(x,x)\]
holds for every positive Borel function $\varphi$ on~$\mathbb{R}^2$, and the very definition of $\delta(u-v)$ as a positive measure on $\mathbb{R}^2$ ensures that $\int {\rm d}x\, \delta_x(u) \delta_x(v)=\delta(u-v)$. Thus~$(iii)$ holds in this special case, and we shall use it to prove the general case. Namely we translate the special case
\[ \int_x {\rm d}x\, \delta_x(u_\alpha)\delta_x(v_\beta) = \delta(u_\alpha- v_\beta),\]
where $\delta(u_\alpha- v_\beta)$ is to be interpreted as the measure in the plane $(u_\alpha,v_\beta)\in\mathbb{R}^2$, into
\[ \int {\rm d}x\, \delta(u_\alpha-x)\delta(v_\beta-x) = \delta(u_\alpha- v_\beta),\]
where now $\delta(u_\alpha-x)$ is interpreted as the measure $\delta(f)$ on $\mathbb{R}^A$ for $f(u)=u_\alpha-x$, $\delta(v_\beta-x)$ is interpreted as the measure $\delta(g)$ on $\mathbb{R}^B$ for $g(v)=v_\beta-x$, and $\delta(u_\alpha- v_\beta)$ as the measure $\delta(h)$ on $\mathbb{R}^{A\cup B}$ for $h(u,v)=u_\alpha- v_\beta$. Thus
\begin{align}
\int {\rm d}x], \delta(P_x)\delta(Q_x) & = \sum_{\alpha\in A \atop \beta \in B} \int {\rm d}x\, \frac{1}{|P'(u_\alpha)|} \delta(u_\alpha-x) \frac{1}{|Q'(v_\beta)|} \delta(v_\beta-x) \nonumber \\
& = \sum_{\alpha\in A \atop \beta \in B} \frac{1}{|P'(u_\alpha)||Q'(v_\beta)|} \delta(u_\alpha- v_\beta)\label{eq:delPQdx} \\
& = \sum_{\alpha\in A \atop \beta \in B} \frac{1}{|P'(v_\beta)||Q'(u_\alpha)|} \delta(u_\alpha- v_\beta),\label{eq:delPQdxs}
\end{align}
where in the last line we have used that $\frac{1}{P'(u_\alpha)Q'(v_\beta)}$ and $\frac{1}{P'(v_\beta)Q'(u_\alpha)}$ coincide on the support~$\Sigma_{\alpha,\beta}$ of the measure $\delta(u_\alpha- v_\beta)$.
On the other hand, using Definition~\ref{def:delta-prod} for $R(u,v)$ we get
\begin{gather} \label{eq:delR}\delta(R)=\sum_{\alpha\in A \atop \beta \in B} \frac{1}{|J_{\alpha,\beta}(u,v)|} \delta(u_\alpha- v_\beta),\end{gather}
where
\[ J_{\alpha,\beta}(u,v):=a^{|B|} b^{|A|}\prod_{(\alpha',\beta')\neq (\alpha,\beta)} (u_{\alpha'}- v_{\beta'}). \]
Observe that, on $\Sigma_{\alpha,\beta}$,
\[ a\prod_{\alpha' \neq \alpha} (u_{\alpha'}- v_{\beta})= (-)^{|A|-1}P'(v_\beta), \qquad b\prod_{\beta' \neq \beta} (u_{\alpha}- v_{\beta'})= Q'(u_\alpha), \]
so that, again on $\Sigma_{\alpha,\beta}$,
\begin{gather} \label{eq:jab} J_{\alpha,\beta}(u,v) = a^{|B|-1} b^{|A|-1} \prod_{\alpha'' \neq \alpha \atop \beta''\neq\beta} (u_{\alpha''}- v_{\beta''}) (-)^{|A|-1}P'(v_\beta) Q'(u_\alpha).\end{gather}
Fix $\alpha \in A$ and $\beta\in B$. If $\beta'\neq \beta$ then $\prod_{\beta'' \neq \beta'} \prod_{\alpha''\in A}(v_{\beta''}-u_{\alpha''})$ contains a factor $v_{\beta}-u_{\alpha}$. Thus, if $v_{\beta}=u_{\alpha}$, all terms but the one corresponding to $\beta'=\beta$ in the sum defining $J(u,v)$ vanish. Hence, on~$\Sigma_{\alpha,\beta}$, we have
\begin{align*}
J(u,v) & = a^{|B|-1} b^{|A|-1}\prod_{\beta'' \neq \beta}
 \frac{\prod\limits_{\alpha''\in A}(v_{\beta''}-u_{\alpha''})}{(v_{\beta}-v_{\beta''})}
 = a^{|B|-1} b^{|A|-1}\prod_{\beta'' \neq \beta}\frac{\prod\limits_{\alpha''\in A}(v_{\beta''}-u_{\alpha''})}{(u_\alpha-v_{\beta''})}\\
 & = a^{|B|-1} b^{|A|-1} (-)^{|B|-1}\prod_{\alpha''\neq \alpha \atop \beta'' \neq \beta} (v_{\beta''}-u_{\alpha''}).
\end{align*}
Using equation~\eqref{eq:jab}, we obtain that, on $\Sigma_{\alpha,\beta}$,
\[\frac{|J(u,v)|}{|J_{\alpha,\beta}(u,v)|}=\frac{1}{|P'(v_\beta)||Q'(u_\alpha)|}.\]
Comparison of equations~\eqref{eq:delPQdxs} and~\eqref{eq:delR} establishes the validity of~$(iii)$.
\end{proof}

\begin{remark} The explicit expression of the polynomial $J(u,v)$ is quite complicated in general. It is rather simple in the case when $Q$ is of degree~$2$. Then, writing $v=(v',v'')$ i.e., $Q_v(x)=b(x-v')(x-v'')$ one obtains
\[ J(u,v)=-b^{|A|-1}\frac{P(v')-P(v'')}{v'-v''},\]
which is a simple divided difference, hence is obviously polynomial in~$(u,v)$.
Even the case when~$P$ and~$Q$ are of degree $3$ leads to a prohibitively complicated explicit expression for $J(u,v)$ as a~polynomial.
\end{remark}

\begin{remark} It is a simple consequence of equation~\eqref{eq:delPQdx} that integration over $x$ does not introduce unexpected singularities: the measure of a compact set $K$ in $\mathbb{R}^{A\cup B}$ under $\int {\rm d}x\, \delta(P_x)\delta(Q_x)$ is finite if~$K$ does not meet any hyperplane $u_{\alpha}=u_{\alpha'}$, $\alpha\neq \alpha'$ or $v_{\beta}=v_{\beta'}$, $\beta\neq \beta'$.
\end{remark}

\subsection*{Acknowledgements}

We thank Michel Talagrand for his comments and encouragement.

\pdfbookmark[1]{References}{ref}
\LastPageEnding

\end{document}